\documentclass[12pt,reqno]{article}

\usepackage[usenames]{color}
\usepackage{amssymb}
\usepackage{amsmath}
\usepackage{amsthm}
\usepackage{amsfonts}
\usepackage{amscd}
\usepackage{graphicx}
\usepackage{xcolor}
\usepackage{float}

\usepackage[colorlinks=true,
linkcolor=webgreen,
filecolor=webbrown,
citecolor=webgreen]{hyperref}

\definecolor{webgreen}{rgb}{0,.5,0}
\definecolor{webbrown}{rgb}{.6,0,0}
\usepackage{fullpage}

\usepackage{graphics}
\usepackage{latexsym}
\usepackage{epsf}
\usepackage{breakurl}

\theoremstyle{plain}
\newtheorem{theorem}{Theorem}

\newtheorem{lemma}[theorem]{Lemma}
\newtheorem{proposition}[theorem]{Proposition}

\theoremstyle{definition}

\theoremstyle{remark}

\newcommand{\N}{\mathbb{N}}
\newcommand{\Z}{\mathbb{Z}}

\title{Solution to Bucher's density problem\\
for context-free languages}
\author{Rastko Maslic\\
Belgrade\\
Serbia\\
\href{mailto:rastko.maslic@gmail.com}{\tt rastko.maslic@gmail.com}\\
\and Jeffrey Shallit\\
School of Computer Science\\
University of Waterloo\\
Waterloo, ON  N2L 3G1\\
Canada\\
\href{mailto:shallit@uwaterloo.ca}{\tt shallit@uwaterloo.ca}}
\date{7 September 2026}

\begin{document}
\maketitle

\begin{abstract}
\normalsize
In 1980 Bucher asked whether, given context-free languages
\(L\subseteq U\) with \(U\setminus L\) infinite, there must be a
context-free language \(K\) between them for which both
\(K\setminus L\) and \(U\setminus K\) are infinite. We give a
negative answer. 

We first construct an infinite language \(D\) with
context-free complement such that, for every regular language \(R\),
either \(D\cap R\) or \(D\setminus R\) is finite. The words of \(D\)
encode computations of factorials; repetition of letters ensures that each finite automaton either accepts all but finitely many words of \(D\) or rejects all but finitely many words of \(D\),
while a one-counter automaton recognizes errors in the encodings.
We then construct \(L\) and \(U\) from the complement of \(D\).
A grammar argument shows that any context-free intermediate language $K$ would
divide \(D\) in the same way as some regular language. This proves
the required impossibility. Both \(L\) and \(U\) can be taken over a
binary alphabet.
\end{abstract}

\section{Bucher's problem and the main theorem}\label{sec:intro}

\subsection{The original question}

Bucher \cite{ref:bucher} asked the following question in 1980: given a family of languages \(\mathcal L\) and
\(L, U\in\mathcal L\) such that
\[
 L\subseteq U,\qquad |U\setminus L|=\infty,
\]
must there be \(K\in\mathcal L\) with
\begin{equation}\label{eq:bucher}
 L\subseteq K\subseteq U,\qquad
 |U\setminus K|= \infty, \qquad |K\setminus L|=\infty?
\end{equation}

Bucher gave affirmative answers for the regular, recursive, and
deterministic context-sensitive languages, and singled out the context-free languages as an unresolved case.
The motivation came from
density questions in grammatical similarity \cite{ref:mswforms,ref:mswspaces}.

We solve Bucher's problem by proving that the answer for context-free languages is negative:
\begin{theorem}\label{thm:main}
There is a context-free language \(L\subseteq U\subseteq\{0,1\}^*\)
such that \(U\setminus L\) is infinite, but every context-free
language \(K\) with \(L\subseteq K\subseteq U\) satisfies
\[
 |K\setminus L|<\infty
 \quad\hbox{or}\quad
 |U\setminus K|<\infty.
\]
\end{theorem}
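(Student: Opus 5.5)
The plan follows the outline in the abstract: build a ``regularly indivisible'' infinite language \(D\) with context-free complement, then wrap it so that every context-free \(K\) between \(L\) and \(U\) is forced to cut \(D\) the way some regular language does.

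\textbf{Step 1: the language \(D\).} I would construct an infinite \(D\subseteq\Sigma^*\) with \(\Sigma^*\setminus D\) context-free such that for every regular \(R\), one of \(D\cap R\) and \(D\setminus R\) is finite. The words of \(D\) will be encodings
\[
 w_n=\#a^{1!}\#a^{2!}\#\cdots\#a^{n!}\#,
\]
with enough letter repetition that, by a pumping/periodicity argument, every DFA with \(m\) states eventually behaves identically on all long \(w_n\). The point is that on each block \(a^{k!}\) with \(k\ge m\), the automaton's state after the block depends only on its state before the block, since \(k!\) is a multiple of every period \(\le m\) and exceeds every preperiod. So the run on \(w_n\) stabilizes, and acceptance of \(w_n\) is eventually constant.

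The complement is context-free because a word fails to be in \(D\) iff it has bad syntax (regular) or some adjacent pair of blocks violates \(|b_{k+1}|=(k+1)|b_k|\). To make the latter checkable by a pushdown or one-counter automaton, I would redesign the encoding, for instance by writing \(a^{k!}\) as \(k+1\) marked copies of a previous block, so that each local error is a single length mismatch between two factors. A one-counter machine can guess such a mismatch nondeterministically.

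\textbf{Step 2: \(L\) and \(U\).} Let \(C=\Sigma^*\setminus D\), which is context-free. I would take \(U=\Sigma^*\), or some context-free set of pairs, and \(L=C\). Then \(U\setminus L=D\) is infinite, and any \(K\) in between has the form \(C\cup(K\cap D)\). This is too naive, since \(K=C\cup R\) for regular \(R\) fails nothing. The real construction should instead encode pairs, say \(U=\{u\$v\}\) over a doubled structure, so that any context-free \(K\) splitting \(D\) infinitely in both directions would yield, via a grammar argument, a regular \(R\) splitting \(D\) infinitely both ways, contradicting Step 1.

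\textbf{Step 3: the grammar argument.} For a CFG \(G\) for \(K\), and for long \(w_n\in K\cap D\), I would use the \(C\)-padding in \(L\) and Ogden's lemma to show that membership of \(w_n\) in \(K\) is determined by a finite-state summary: the set of nonterminals deriving suffixes and prefixes aligned with block boundaries. This gives a regular \(R\) with \(R\cap D=K\cap D\) up to finitely many words. Step 1 then makes \(K\cap D\) finite or cofinite in \(D\), which yields the theorem. Binary alphabet comes from a standard homomorphic encoding.

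\textbf{Main obstacle.} I expect Step 3, and designing \(L\) and \(U\) so that it works, to be hardest. The difficulty is guaranteeing that a context-free \(K\) cannot exploit the factorial structure, which a regular \(R\) cannot see. My first attempt would make \(U\) contain, together with each word of \(D\), all of its ``near-miss'' variants lying in \(L\), so that pumping within \(K\) preserves membership in \(U\). A CFG's behavior on \(D\) is then governed by nonterminal types, which form a finite monoid, and hence by a regular set.
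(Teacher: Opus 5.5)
Your Step~1 fails at the inference ``so the run on $w_n$ stabilizes.'' It is true that for $k\ge m$ each block $a^{k!}$ acts on the state set of an $m$-state DFA as one fixed map $\sigma$. But after a fixed prefix, the rest of $w_n$ then acts as $\gamma^{\,n-m+1}$, where $\gamma=\tau_{\#}\circ\sigma$ and $\tau_{\#}$ is the action of $\#$. Powers of $\gamma$ are only eventually periodic in $n$, not eventually constant. Concretely, the two-state automaton for ``even number of $\#$'s'' accepts $w_n$ exactly when $n$ is odd, so it splits your $D$ into two infinite parts. Your redesign with $k+1$ marked copies has the same defect, since the number of markers still depends on $n$.

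The missing idea is that the \emph{number} of repeated units must also be a multiple of $n!$, not only their lengths. The paper does this by encoding a run of a fixed factorial program and padding it with idle steps to exactly $n!$ steps. Each step is written as $r$ identical words whose block lengths are all multiples of $n!$. For $n\ge\max\{4,q\}$, each of these $rn!$ words acts on a $q$-state DFA as the same map $g$, so $w_n$ acts as $g^{rn!}=g^{q!}$, independent of $n$. Enforcing the padding is why the one-counter error checker needs the equations $m=B_1$ and $\sum_i B_i=O_m-B_m$. Using a program with fixed-coefficient affine updates (multiplication by repeated addition) also makes the complement one-counter: every test becomes a linear comparison of block lengths with fixed coefficients. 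Your original test $|b_{k+1}|=(k+1)|b_k|$, by contrast, multiplies two input-dependent quantities.

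Steps~2 and~3 do not yet specify $L$ and $U$ or give a working argument. The paper takes $A=\Sigma^*\setminus D$, $L=A\#A^R=\{x\#y^R:x,y\in A\}$ and $U=L\cup\{w\#w^R:w\in\Sigma^*\}$, so $U\setminus L=\{w\#w^R:w\in D\}$. Instead of Ogden-type pumping it uses a rigidity argument. In a parse tree for $w\#w^R\in K$ with $w\in D$, every variable off the path to $\#$ derives only one terminal word. Substituting another would give $x\#w^R$ or $w\#y^R$ with one half changed, and such a word is neither of the form $v\#v^R$ nor in $A\#A^R$, because $w\notin A$. Replacing those variables by fixed words gives a linear grammar generating some $K_0\subseteq K$ that still contains every such $w\#w^R$. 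The left halves of $K_0$ form a regular $R$ with $w\#w^R\in K\Longleftrightarrow w\in R$ for all $w\in D$.

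This is the opposite of your closing intuition. If one-sided near-misses of $w\#w^R$ lay in $U$ (let alone in $L\subseteq K$), pumping inside $K$ would produce no contradiction at all. The proof works precisely because they are excluded from $U$. Having $A$ on both sides of $\#$ is what excludes them, while choosing the two halves independently keeps $L$ context-free.
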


\subsection{Outline of the proof}

All alphabets in the paper are finite. For an alphabet \(\Sigma\), we let
\(\Sigma^*\) denote the set of all finite words over \(\Sigma\),
including the empty word \(\varepsilon\). We write \(w^R\) for the
reversal of a word \(w\), and \(A^R=\{w^R:w\in A\}\) for the
reversal of a language. Also, \(\N=\{0,1,2,\ldots\}\), and \(\Z\)
denotes the integers.

The construction starts with an infinite language \(D\subseteq\Sigma^*\)
having the following two properties:
\begin{enumerate}
\item Its complement \(A=\Sigma^*\setminus D\) is context-free.
\item For every regular \(R\subseteq\Sigma^*\), at least one of
      \(D\cap R\) and \(D\setminus R\) is finite.
\end{enumerate}
The second property says that a finite automaton cannot accept
infinitely many words of \(D\) and reject infinitely many others.

Choose a symbol \(\#\) that is not in \(\Sigma\). It serves as a
\emph{separator}: in the words below it occurs exactly once, between
two words over \(\Sigma\). Define
\begin{equation}\label{eq:endpoints}
 \begin{split}
 L&=A\#A^R=\{x\#y^R:x,y\in A\},\\
 U&=L\cup\{w\#w^R:w\in\Sigma^*\}.
 \end{split}
\end{equation}
The words \(x\) and \(y\) in the definition of \(L\) are chosen
independently. Both \(L\) and \(U\) are context-free, and
\[
 U\setminus L=\{w\#w^R:w\in D\}.
\]
Section~\ref{sec:grammar} proves that, for any context-free
\(K\subseteq U\),  when \(w\in D\), membership of \(w\#w^R\) in \(K\)
agrees with membership of \(w\) in a suitable regular language.
The second property of \(D\) then gives the conclusion of the theorem.

Sections~\ref{sec:counter}--\ref{sec:factorial} construct the language \(D\).
For each \(n\geq4\), we encode a computation of \(n! = 1 \cdot 2 \cdots n\) by a word
\(w_n\). We extend the computation to exactly \(n!\) steps by
appending steps in which nothing changes, and use multiples of \(n!\) as
the lengths of all the repeated-letter portions of the encoding.
We prove that errors in such words can be recognized by a
one-counter automaton, while every finite automaton treats all $w_n$ the same, if $n$ is sufficiently large.
Section~\ref{sec:mainproof}
combines the construction with the grammar argument and then
encodes the resulting languages over two letters.

\subsection{Earlier work}

Domaratzki discussed Bucher's problem in Sections~2.4 and~5.4 of
his thesis \cite{ref:dom}, in connection with his joint work
with Shallit and Yu \cite{ref:dsy} on regular and context-free
supersets of languages.
In Theorems~4.3.1--4.3.3 of the thesis, he used palindromes to
relate questions about regular supersets to corresponding
questions about context-free supersets.
Horv\'ath, Karhum\"aki, and Kleijn \cite{ref:hkk} had proved a
structural characterization of context-free languages consisting of
palindromes. D\"om\"osi, Fazekas, and Ito
\cite[Thm.~13,]{ref:dfi} later gave another proof.
These results provided precedents for deriving restrictions on a
context-free grammar from the requirement that certain words have
matching reversed halves. Our argument also allows words whose
halves are different, namely the words in \(A\#A^R\).

Yamakami and Kato \cite{ref:yk} studied \emph{regular dissection}:
a regular language \(R\) dissects a language \(E\) when both
\(E\cap R\) and \(E\setminus R\) are infinite. In particular, they
used the unary language \(\{a^{n!}:n\geq1\}\) to show that an
infinite language need not admit a regular dissection. Their
Section~6 asked whether every infinite language with context-free
complement admitted such a dissection. The same question appeared
in the DCFS~2015 problem list \cite[Section~5,
Question~1]{ref:dcfs}. The language \(D\) needed for the present proof
satisfies exactly the contrary property.

The second author restated Bucher's question as Open Problem~6 in
\cite[Slide 28]{ref:shallit}. Sin'ya
\cite[Cor.~1]{ref:sinya} proved an affirmative result when
both given languages are unambiguous context-free languages.

The use of automata to recognize invalid computations appeared, for example, in
Hartmanis \cite{ref:hartmanis} and in the exposition of Hopcroft
and Ullman \cite[Section 8.6]{ref:hu}. Hoogeboom
\cite[Thm.~9]{ref:hoogeboom} also presented this method.
An automaton guesses an error in a word that purports to describe
a computation and checks that error. Section~\ref{sec:encoding}
gives the full argument for the encoding used here, including the
repetition of letters required by the finite-automaton argument.

\section{Extracting a regular language from a grammar}\label{sec:grammar}

For \(E\subseteq\Sigma^*\), write
\[
 \Delta(E)=\{w\#w^R:w\in E\}.
\]
Fix any language \(A\subseteq\Sigma^*\), and put
\[
 D=\Sigma^*\setminus A,\qquad U_A=A\#A^R\cup\Delta(\Sigma^*).
\]
The next lemma says that, on words indexed by \(D\), a
context-free sublanguage of \(U_A\) makes the analogous membership
decisions as a certain regular language over \(\Sigma\).
Here \(A\) need not be context-free.
The lemma extends the marked-palindrome consequence of the characterization of Horv{\'a}th, Karhumäki, and Kleijn \cite{ref:hkk}. Related applications to intermediate languages appeared in Domaratzki's work on minimal covers \cite{ref:dom}. The additional words allowed here require a new argument.

\begin{lemma}\label{lem:grammar}
For every context-free \(K\subseteq U_A\), there is a regular
\(R\subseteq\Sigma^*\) such that
\begin{equation}\label{eq:agreement}
 w\in D\quad\Longrightarrow\quad
 \bigl(w\#w^R\in K\ \Longleftrightarrow\ w\in R\bigr).
\end{equation}
\end{lemma}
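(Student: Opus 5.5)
The plan is to read off \(R\) from the \emph{spine} of a derivation tree, the path from the root to the leaf \(\#\), and to use the fact that every word of \(K\) lies in \(U_A\) to make the side contributions along the spine rigid.

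Fix a grammar \(G\) for \(K\) in Chomsky normal form and remove useless nonterminals. Every word of \(U_A\) contains exactly one \(\#\). So every nonterminal \(X\) is of one of two kinds: a \emph{side} nonterminal, deriving only words over \(\Sigma\), or a \emph{spine} nonterminal, deriving only words \(s\#t\) with \(s,t\in\Sigma^*\). In a derivation of \(w\#w^R\) the spine is a sequence \(X_0\to \alpha_1X_1\beta_1\), \(X_1\to\alpha_2X_2\beta_2\), and so on, where in CNF each step has exactly one of \(\alpha_i,\beta_i\) a single side nonterminal. This gives factorizations \(w=u_1\cdots u_k\) and \(w^R=v_k\cdots v_1\) with \(u_i\in L(\alpha_i)\) and \(v_i\in L(\beta_i)\).

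The key step is an exchange argument, and it uses the hypothesis \(w\in D\). Suppose \(w\#w^R\in K\) with \(w\in D\). Replace a single side subtree, deriving \(u_i\), by any other word \(u'\in L(\alpha_i)\). The resulting word \(x\#w^R\) is still in \(K\subseteq U_A\). Its right half is \(w^R\notin A^R\), so it cannot lie in \(A\#A^R\). Hence it lies in \(\Delta(\Sigma^*)\), which forces \(x=w\). So every side language occurring on the spine of such a derivation can be swapped only for words leaving \(w\) unchanged. The same holds for the right-hand sides \(v_i\), since their replacements would change \(w^R\). Applying the same reasoning to pumping, a repeated spine nonterminal \(X\Rightarrow^* pXq\) occurring in such a derivation must satisfy \(p\,\cdot\,\#\,\cdot\, q\)-compatibility. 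Concretely, both iterating and deleting the loop must keep the word a palindrome, since the right half stays outside \(A^R\) whenever the left half changes. I will push this to conclude that the loops actually used are of the form \((p,p^R)\) together with local consistency.

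With this rigidity, I define a finite automaton \(M\) that reads \(w\) left to right. Its states are the spine nonterminals together with a bounded buffer. \(M\) guesses the spine productions and verifies that each left piece \(u_i\) is matched by a right piece equal to \(u_i^R\). Let \(R=L(M)\). The regularity must come from the rigidity above: for \(w\in D\), a derivation of \(w\#w^R\) can be normalized, by cutting out nonpalindromic detours via the exchange argument, to one where the spine behaves like a right-linear grammar producing \((u,u^R)\) pairs. Membership of \(u\) in \(L(\alpha_i)\) is then only ever tested on pieces of bounded length or on pieces forced to coincide. Conversely, if \(w\in R\cap D\), the accepting run of \(M\) is assembled into a genuine derivation of \(w\#w^R\) in \(G\).

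The main obstacle is this middle step: making the side languages effectively regular. The languages \(L(\alpha_i)\) are context-free, so naively checking \(u_i\in L(\alpha_i)\) and \(u_i^R\in L(\beta_i)\) is not a finite-state task. The first thing I would try is the Horv\'ath–Karhum\"aki–Kleijn style argument. I would show that in any spine derivation of \(w\#w^R\) with \(w\in D\), the exchange argument forces each side subtree of large yield to be pairable with a mirror subtree on the other side of \(\#\). That reduces the derivation to a linear grammar whose rules emit \((a,a)\)-pairs around the spine. The language \(\{w: w\#w^R \text{ is generated}\}\) for such a grammar is regular: its spine nonterminals become automaton states, and the rules \(X\to aYa\) become transitions on \(a\). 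Carrying out this normalization carefully is where the new argument beyond the marked-palindrome case is needed. The difficulty is that \(K\) may also contain the non-palindromic words of \(A\#A^R\), so the exchange argument is only available because \(w\notin A\).
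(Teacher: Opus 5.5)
There is a genuine gap. Your exchange argument is the right key step, and it is the one the paper uses, but you do not draw its consequence. If swapping the side subtree that yields $u_i$ for any $u'\in L(\alpha_i)$ must give $x=w$, then $u'=u_i$, since the rest of that half is unchanged. So every side nonterminal hanging off the spine of a derivation of $w\#w^R$ with $w\in D$ generates a \emph{single} word. The ``main obstacle'' you name, testing membership in the context-free languages $L(\alpha_i)$, therefore never arises.

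What is missing is a uniform construction of $R$ from the grammar alone, together with the observation that $R$ never has to inspect the right half. The paper fixes, for every side nonterminal $B$, an arbitrary word $t_B\in L(B)$ and substitutes it into every spine production. This gives a linear grammar with rules $X\to uYv$ and $X\to u\#v$. Its language $K_0$ satisfies $K_0\subseteq K$. It still contains $w\#w^R$ whenever $w\in D$ and $w\#w^R\in K$, because on that tree $t_B$ is the only word $B$ can produce. Then $R=\{x:\ x\#z\in K_0\text{ for some }z\}$ is regular, with spine nonterminals as states and the left words $u$ as edge labels. For $w\in D\cap R$, the inclusion $K_0\subseteq U_A$ together with $w\notin A$ forces $z=w^R$, so $w\#w^R\in K$.

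The route you sketch instead does not work as stated. Your pumping claim is false: pumping a spine loop $X\Rightarrow^* pXq$ changes \emph{both} halves. So the right half need not stay outside $A^R$, and the pumped words may legitimately lie in $A\#A^R$. For example, take $\Sigma=\{a,b\}$ and $D=\{ab\}$. The grammar $S\to Xa$, $X\to aXb\mid b\#$ generates $\{a^ib\#b^ia:i\geq0\}\subseteq U_A$. The derivation of $ab\#ba$ uses the non-palindromic loop $X\Rightarrow aXb$, whose left and right pieces do not mirror each other step by step.

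So derivations cannot in general be normalized to rules emitting pairs of equal letters. An automaton that matches each left piece against a mirrored right piece, with a buffer whose size you never bound, is checking a local correspondence that need not exist. The agreement of the two halves is global. It comes for free from $K\subseteq U_A$ and $w\notin A$, not from the automaton.
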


\begin{proof}
If \(K=\varnothing\), take \(R=\varnothing\). Otherwise choose a
context-free grammar \(G\) for \(K\), and remove every variable
and production that never occur in a derivation of a word of \(K\).
Let \(S\) be the start symbol. We use \(\Rightarrow^*\) to denote
zero or more grammar derivation steps.

\paragraph{Classifying variables by occurrences of the separator.}
For every remaining variable \(B\), there are terminal words
\(u,v\) with \(S\Rightarrow^*uBv\), and \(B\) derives at least one
terminal word. Replacing \(B\) by any word that it derives must give
a word of \(K\). Since every word of \(K\) has exactly one occurrence
of \(\#\), all the words derived from \(B\) have the same number
of occurrences of \(\#\). This number is either zero or one.
Let \(N_0\) and \(N_1\) be the two corresponding sets of
variables.

The start symbol belongs to \(N_1\). In a production with its
left side in \(N_1\), exactly one item on the right side produces
the separator: either that item is the terminal \(\#\), or it is
a variable in \(N_1\). The other items are letters in \(\Sigma\)
or variables in \(N_0\).

\paragraph{Variables on either side of the separator.}
Fix a parse tree for \(w\#w^R\in K\) with \(w\in D\).
There is a unique path from its root to the leaf labelled \(\#\).
Every variable occurring off this path derives exactly one
terminal word. To prove this, suppose such an occurrence derives
the word \(z\) used in the tree, but can also derive a different
word \(z'\). The occurrence lies entirely on one side of \(\#\).
If it lies on the left, substituting \(z'\) gives a word
\(x\#w^R\in K\) with \(x\ne w\). Indeed, cancelling the unchanged
prefix and suffix on that side would otherwise give \(z=z'\).
The resulting word is not of the form \(v\#v^R\). It also is not in
\(A\#A^R\), since its right half is \(w^R\) and \(w\notin A\).
This contradicts \(K\subseteq U_A\). If the occurrence lies on
the right, the same argument uses the unchanged left half \(w\).

\paragraph{Replacing these variables by fixed words.}
For each \(B\in N_0\), choose one terminal word \(t_B\) derived
from \(B\). In every production with its left side in \(N_1\),
replace each occurrence of \(B\in N_0\) by \(t_B\), and retain
only the variables in \(N_1\). The resulting grammar \(G_0\)
has productions of the forms
\begin{equation}\label{eq:lineargrammar}
 X\longrightarrow uYv
 \quad\hbox{or}\quad
 X\longrightarrow u\#v,
 \qquad u,v\in\Sigma^*,\quad X,Y\in N_1.
\end{equation}
Let its language be \(K_0\). Every derivation in \(G_0\) can be
expanded to a derivation in \(G\), so \(K_0\subseteq K\).
Moreover, the parse tree considered above is preserved: every
variable replaced in that tree derived only the word already
used there. Therefore
\begin{equation}\label{eq:preserved}
 w\in D,\ w\#w^R\in K
 \quad\Longrightarrow\quad w\#w^R\in K_0.
\end{equation}

\paragraph{Recognizing the words to the left of the separator.}
Now construct a finite directed graph with vertices \(N_1\) and one
additional accepting vertex \(f\). A rule \(X\to uYv\) gives an
edge from \(X\) to \(Y\) labelled \(u\). A rule \(X\to u\#v\)
gives an edge from \(X\) to \(f\) labelled \(u\). The initial
vertex is \(S\). Replacing a word-labelled edge by a finite path
of single-letter edges, and allowing edges that consume no
letter for empty labels, gives a finite automaton.

Its language is
\[
 R=\{x\in\Sigma^*:\text{there is }z\in\Sigma^*
                         \text{ with }x\#z\in K_0\}.
\]
To see this, follow the successive productions in
\eqref{eq:lineargrammar}. The words \(u\) are concatenated in
the order of the corresponding edges. Conversely, an accepting
path specifies a derivation in \(G_0\); the words \(v\) from
those productions determine its right half. Thus \(R\) is regular.

If \(w\in D\cap R\), some \(w\#z\) belongs to
\(K_0\subseteq U_A\). Because \(w\notin A\), this word cannot
belong to \(A\#A^R\). It follows that \(z=w^R\), giving
\(w\#w^R\in K\). The reverse implication in
\eqref{eq:agreement} follows from \eqref{eq:preserved}.
\end{proof}

\begin{proposition}\label{prop:reduction}
Suppose \(A\subseteq\Sigma^*\) is context-free, its complement
\(D\) is infinite, and for every regular language \(R\), either
\(D\cap R\) or \(D\setminus R\) is finite. Then the languages
\[
 L=A\#A^R,\qquad U=L\cup\Delta(\Sigma^*)
\]
are context-free, \(U\setminus L\) is infinite, and every
context-free \(K\) with \(L\subseteq K\subseteq U\) has either
\(K\setminus L\) or \(U\setminus K\) finite.
\end{proposition}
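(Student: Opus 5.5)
The plan is to verify the three claims in order: context-freeness of \(L\) and \(U\), infinitude of \(U\setminus L\), and the dichotomy for intermediate \(K\) via Lemma~\ref{lem:grammar}.

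\textbf{Context-freeness.} Context-free languages are closed under reversal and concatenation. Since \(A\) is context-free, so are \(A^R\) and \(L=A\#A^R\). The language \(\Delta(\Sigma^*)=\{w\#w^R\}\) is generated by the grammar \(S\to aSa\mid\#\) for \(a\in\Sigma\). By closure under union, \(U\) is context-free.

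\textbf{The difference \(U\setminus L\).} We identify it as \(\Delta(D)\). Every word of \(U\) contains exactly one \(\#\), so a word \(w\#w^R\) lies in \(L\) iff \(w\in A\) and \(w^R\in A^R\), that is, iff \(w\in A\). Hence
\[
U\setminus L=\Delta(\Sigma^*)\setminus L=\Delta(D).
\]
This set is infinite because \(D\) is infinite and \(w\mapsto w\#w^R\) is injective.

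\textbf{The dichotomy.} Let \(K\) be context-free with \(L\subseteq K\subseteq U\). Since \(U=U_A\) in the notation of Section~\ref{sec:grammar}, Lemma~\ref{lem:grammar} gives a regular \(R\) satisfying \eqref{eq:agreement}. Because \(L\subseteq K\), we have
\[
K\setminus L=K\cap\Delta(D)=\Delta(D\cap R),\qquad
U\setminus K=\Delta(D)\setminus K=\Delta(D\setminus R),
\]
where both identities use \eqref{eq:agreement} on \(w\in D\). By hypothesis one of \(D\cap R\) and \(D\setminus R\) is finite. Since \(\Delta\) preserves cardinality, either \(K\setminus L\) or \(U\setminus K\) is finite.

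All the real work is done by Lemma~\ref{lem:grammar}. The only point needing care is the identity \(U\setminus L=\Delta(D)\), which relies on the separator occurring exactly once, so that the decomposition \(x\#y^R\) of a word in \(L\) is unique.
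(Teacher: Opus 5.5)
Your proof is correct and follows the paper's argument: closure properties plus the grammar for \(\Delta(\Sigma^*)\) give context-freeness, then \(U\setminus L=\Delta(D)\), and Lemma~\ref{lem:grammar} yields \(K\setminus L=\Delta(D\cap R)\) and \(U\setminus K=\Delta(D\setminus R)\), with injectivity of \(w\mapsto w\#w^R\) finishing the argument. One small attribution slip: the identity \(K\setminus L=K\cap\Delta(D)\) uses \(K\subseteq U\), while \(U\setminus K=\Delta(D)\setminus K\) is the identity that uses \(L\subseteq K\).
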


\begin{proof}
The operations of reversal, concatenation, and union preserve context-freeness \cite{ref:hu}.
The language \(\Delta(\Sigma^*)\) is generated by the grammar
\[
 T\longrightarrow\#\ \mid\ aTa\qquad(a\in\Sigma).
\]
Thus \(L\) and \(U\) are context-free, and their difference is
\(\Delta(D)\).
For an intermediate language \(K\) satisfying \eqref{eq:bucher}, Lemma~\ref{lem:grammar} gives a regular
\(R\) for which
\begin{equation}\label{eq:differences}
 K\setminus L=\Delta(D\cap R),\qquad
 U\setminus K=\Delta(D\setminus R).
\end{equation}
The map \(w\mapsto w\#w^R\) is injective. The hypothesis on \(D\)
therefore makes its image \(\Delta(D)\) infinite and makes one
of the two sets in \eqref{eq:differences} finite.
\end{proof}

The use of \(A\) on both sides of the separator is essential to
the preceding argument: when a subtree on one side is changed,
the unchanged half indexed by \(w\notin A\) excludes membership in
\(A\#A^R\). Choosing the two words independently also ensures
that \(A\#A^R\) is context-free.

\section{Testing equations with one counter}\label{sec:counter}

We next describe the automata used to recognize incorrect
encodings of computations. A \emph{nondeterministic one-counter
automaton} consists of a finite-state control and a nonnegative
integer counter; see, for example,
\cite{ref:cl}. It can increment the counter, decrement it when
positive, and test whether it is zero. It may also make $\varepsilon$-moves, which consume no input. Such an automaton is a pushdown
automaton with one stack symbol in addition to a distinguished
bottom symbol, so every language it accepts is context-free.
Finite unions of one-counter languages are one-counter languages:
the automaton first chooses nondeterministically which machine to simulate.
Intersection with a regular language is also possible, by keeping
the state of its finite automaton in the finite-state control.

A signed integer can be represented by storing its absolute value
in the counter and its sign in the finite-state control. A fixed
integer can be added or subtracted by finitely many unit operations.
When the absolute value becomes zero, the sign information is
adjusted accordingly.

Let \(a_1,\ldots,a_d\) be distinct letters, where \(d\geq2\).
In a word
\[
 a_1^{z_1}\cdots a_d^{z_d},\qquad z_1,\ldots,z_d>0,
\]
we call each substring \(a_j^{z_j}\) a \emph{block}. Thus a block
consists of consecutive copies of a single letter, and its length
is \(z_j\). The change from one letter to the next identifies
where a block ends.

\begin{lemma}\label{lem:linear}
For fixed integers \(c_0,c_1,\ldots,c_d\), a one-counter automaton
reading \(a_1^{z_1}\cdots a_d^{z_d}\) can determine the sign of the quantity
\[
 c_0+\sum_{j=1}^d c_jz_j.
\]
Consequently it can test any fixed linear equality or inequality
between the block lengths.
\end{lemma}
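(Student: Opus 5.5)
The plan is to maintain the running value of the linear form in the counter, as a signed integer in the manner just described. The automaton starts with the counter at zero and adds \(c_0\) by \(|c_0|\) unit operations, using \(\varepsilon\)-moves, recording the sign in the finite control. It then reads the input from left to right. The finite control also remembers which block it is in, that is, the index \(j\) of the current letter \(a_j\). On each occurrence of \(a_j\) it adds \(c_j\) to the signed value, again by \(|c_j|\) unit increments or decrements. After the whole word has been read, the stored signed value is exactly \(c_0+\sum_j c_jz_j\).

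The key steps are as follows.

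\begin{enumerate}
\item \emph{Signed arithmetic.} A unit step in the positive direction works as follows. If the sign is nonnegative, increment. If the sign is negative, decrement; if the counter then becomes zero, which is detected by the zero test, switch the sign to nonnegative. Unit steps in the negative direction are symmetric. Each step is a fixed finite gadget, and the constants \(c_j\) are fixed, so only finitely many extra states are needed. Since the value changes by one unit at a time, it always passes through zero when its sign changes, so the sign bit stays correct.
\item \emph{Block structure.} The control enforces the pattern \(a_1^+a_2^+\cdots a_d^+\) by advancing \(j\) only on a letter change to \(a_{j+1}\). This amounts to intersecting with a regular language, which is allowed by the remarks preceding the lemma.
\item \emph{Reading the sign.} At the end, the automaton uses the zero test together with the sign bit to classify the value as positive, zero, or negative. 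It enters a corresponding accepting state for whichever condition is being tested.
\end{enumerate}

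For the consequence, any linear equality or inequality between the block lengths can be rewritten as \(c_0+\sum c_jz_j\) being \(=0\), \(>0\), \(\ge 0\), and so on. Integer coefficients suffice after clearing denominators. The automaton then accepts exactly when the final sign is in the required set. Testing \(\neq\) is the union of the tests for \(<\) and \(>\), and finite unions of one-counter languages are one-counter languages.

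No step poses a real obstacle. The only point needing care is the bookkeeping at zero crossings in step 1, namely that the sign in the finite control and the absolute value in the counter stay consistent. This holds because every change is broken into unit steps, and each unit step checks for zero.
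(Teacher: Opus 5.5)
Your proposal is correct and takes the same approach as the paper's proof: initialize the signed counter to \(c_0\), add \(c_j\) by a fixed sequence of unit operations on each occurrence of \(a_j\), and read off the sign of \(c_0+\sum_j c_jz_j\) at the end. Your extra detail on zero crossings and on enforcing the block pattern just makes explicit the signed-counter remarks the paper gives immediately before the lemma.
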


\begin{proof}
Initialize the signed counter to \(c_0\). For each occurrence
of \(a_j\), add \(c_j\), using a fixed sequence of unit operations.
The counter represents the displayed expression at the end of
the word, so its sign gives the required test.
\end{proof}

The coefficients in this lemma are fixed parts of the automaton,
not values supplied in the input. The same method can count
positions recognizable by finite control: for example, it can
add one whenever a prescribed sequence of blocks ends.

\begin{lemma}\label{lem:boolean}
Let \(\Phi(z_1,\ldots,z_d)\) be a fixed Boolean formula built from
\(h\geq1\) linear comparisons using conjunction, disjunction, and
negation. If \(r\geq h\), a one-counter automaton can evaluate
\(\Phi\) on inputs consisting of \(r\) identical copies of
\(a_1^{z_1}\cdots a_d^{z_d}\).
It can accept when the formula is true, or when it is false,
as required.
\end{lemma}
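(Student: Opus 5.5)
Let the comparisons be \(\ell_1,\ldots,\ell_h\), each of the form \(c_0^{(i)}+\sum_j c_j^{(i)}z_j \bowtie_i 0\) with \(\bowtie_i\in\{<,=,>,\le,\ge,\ne\}\). The plan is to evaluate the comparisons one at a time, using the \(i\)th copy of \(a_1^{z_1}\cdots a_d^{z_d}\) for comparison \(\ell_i\). The finite control will store the truth values already obtained, and the single counter will be reused for each comparison. After all \(h\) values are known, the control evaluates the fixed formula \(\Phi\) as a Boolean function of them and accepts or rejects accordingly.

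\textbf{Procedure.} In the \(i\)th copy, for \(i\le h\), the automaton runs the procedure of Lemma~\ref{lem:linear} with coefficients \(c^{(i)}\).
\begin{enumerate}
\item It starts with counter zero and sign recorded in the control.
\item It adds \(c_0^{(i)}\) at the start of the copy and \(c_j^{(i)}\) for each \(a_j\).
\item At the end of the copy it reads off the sign, which decides \(\ell_i\), and records that bit in the control.
\end{enumerate}
Copies \(h+1,\ldots,r\), which exist because \(r\ge h\), are simply read without any counter operations.

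\textbf{Emptying the counter.} The one real obstacle is that the counter is nonempty at the end of a copy. The automaton must clear it before starting the next comparison without consuming any further input. Since \(\varepsilon\)-moves are allowed, it decrements by \(\varepsilon\)-moves until the zero test succeeds, which resets the counter to \(0\). This happens immediately after the sign has been recorded, and the sign is known from the control together with the zero test.

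\textbf{Locating copy boundaries.} The end of a copy is detected in finite control as the transition from \(a_d\) back to \(a_1\), or the end of the input. For the input to have the stated form, the automaton intersects with the regular language \((a_1^+\cdots a_d^+)^r\) by tracking its position in the control. The lemma only concerns inputs of that form, so whether other inputs are accepted does not matter.

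\textbf{Why this suffices.} We do not verify that the \(r\) copies are identical. That identity is a hypothesis on the input, so each \(z_j\) measured in copy \(i\) is the same \(z_j\). Hence the recorded bits are exactly the truth values of \(\ell_1,\ldots,\ell_h\) at \((z_1,\ldots,z_d)\). Accepting on truth or on falsity of \(\Phi\) is then just a choice of the final accepting condition in the control.
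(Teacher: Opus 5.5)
Your proposal is correct and follows the paper's proof: you use the \(i\)-th copy to evaluate the \(i\)-th comparison via Lemma~\ref{lem:linear}, store each truth value in finite control, and empty the counter by \(\varepsilon\)-moves before the next copy. Like the paper, you then read the unused copies without changing the counter and evaluate \(\Phi\) on the stored bits, relying on the hypothesis that the copies are identical instead of checking it.
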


\begin{proof}
Use the \(j\)-th copy to evaluate the \(j\)-th comparison by
Lemma~\ref{lem:linear}. Store its truth value in finite control
and empty the counter by $\varepsilon$-moves before testing the next
copy. The letters distinguish successive blocks and copies,
and the number of copies is fixed. After the \(h\) tests,
evaluate the Boolean combination of the stored truth values.
Read any unused copies without changing the counter.
\end{proof}

This machine is not required to check that the copies are identical.
In the construction below, a separate machine detects a disagreement
between copies. Taking the union of the corresponding error
languages will suffice; no closure under intersections of arbitrary
context-free languages is needed.

\section{Encoding computations}\label{sec:encoding}

\subsection{The programs to be encoded}

Fix a program with finitely many control states and \(k\geq1\) registers,
each holding a nonnegative integer. Its initial state and register
values are fixed, and it has no external input. One register is
designated as the output.

Each instruction specifies a source state, a condition on the
current register values, an update, and a target state. The
condition is a fixed Boolean formula in linear comparisons.
The update is an affine map with integer coefficients:
\[
 \mathbf{x}'=M\mathbf{x}+\mathbf{c},
 \qquad M\in\Z^{k\times k},\quad \mathbf{c}\in\Z^k.
\]
The instruction may be used when its source state and condition
match the current data and all its new register values are
nonnegative. All assignments in an instruction are simultaneous.
There are finitely many instructions, and several may be enabled
at once. One application of an instruction counts as one step,
regardless of the sizes of the register values.

A \emph{configuration} specifies the control state and all register
values. Assign a different nonnegative integer to each state, and
write a configuration as
\[
 \mathbf{v}=(v_1,\ldots,v_\ell),\qquad \ell=k+1,
\]
where \(v_1\) is the number assigned to the state and the remaining
coordinates are the register values. Let \(\mathbf{v}_{\mathrm{init}}\)
be the initial configuration, and let \(\nu\) be the coordinate of
the output register. A \emph{computation} is a sequence of
configurations starting at \(\mathbf{v}_{\mathrm{init}}\), with each
successive pair related by an instruction.

Some states are designated as halting states and have no outgoing
instructions. To allow a computation to be extended to a prescribed
number of steps, we add an instruction at every halting state that
leaves the state and all registers unchanged. We call these
\emph{idle steps}. Padding a halting computation means appending
idle steps, or equivalently repeating its final configuration.
These are the only steps permitted after a halting state has been
reached.

\subsection{Words representing steps and computations}

Let \(\Sigma=\{a_0,a_1,\ldots,a_{2\ell}\}\), with all letters
distinct. Fix a positive integer \(B\), which will be a common
multiplier of the block lengths. For two configurations
\(\mathbf{v}\) and \(\mathbf{w}\), define the word
\begin{equation}\label{eq:stepword}
 E_B(\mathbf{v},\mathbf{w})
 =a_0^B
 a_1^{B(v_1+1)}\cdots a_\ell^{B(v_\ell+1)}\\
 \cdot
 a_{\ell+1}^{B(w_1+1)}\cdots
 a_{2\ell}^{B(w_\ell+1)}.
\end{equation}
Its first block records \(B\). The next \(\ell\) blocks represent
the old configuration \(\mathbf{v}\), and the last \(\ell\)
represent the new configuration \(\mathbf{w}\). The addition of
one makes every block nonempty, even when a coordinate is zero.
For example, for \(\ell=2\), \(B=3\),
\(\mathbf{v}=(0,2)\), and \(\mathbf{w}=(1,0)\), the word is
\[
 E_3(\mathbf{v},\mathbf{w})
   =a_0^3a_1^3a_2^9a_3^6a_4^3.
\]
Recovering a coordinate means dividing the corresponding block
length by \(B\) and subtracting one.

Choose a fixed integer \(r\geq2\), to be specified in
Section~\ref{sec:lengthtests}. We group \(r\) successive words of
the form \(a_0^+a_1^+\cdots a_{2\ell}^+\) into a \emph{record} where, as usual,
 \(a_j^+\) means a positive number of copies of \(a_j\).
A record representing the step from \(\mathbf{v}\) to
\(\mathbf{w}\) is
$
 E_B(\mathbf{v},\mathbf{w})^r
$.
Thus it contains \(r\) identical copies of the word in
\eqref{eq:stepword}. These repetitions will allow the one-counter
automaton to test several equations concerning the same step.
When an arbitrary input is divided into records, the \(r\)
words in a record may have different block lengths; equality
of the copies will be one of the tests.

For the fixed program \(P\), let \(D_P\) consist of the words
\begin{equation}\label{eq:DP}
 \prod_{i=1}^{b}E_b(\mathbf{v}_{i-1},\mathbf{v}_i)^r
\end{equation}
such that \(b\geq1\), \(\mathbf{v}_0=\mathbf{v}_{\mathrm{init}}\),
each pair \((\mathbf{v}_{i-1},\mathbf{v}_i)\) is a permitted
instruction or idle step, and \(\mathbf{v}_b\) is a halting
configuration with output \(b\). The product denotes concatenation
in increasing order of \(i\).
In particular, the output \(b\), the number of records, and the
multiplier used in every block length are all equal.
A computation that halts with output \(b\) after \(t\leq b\)
steps is represented by first appending \(b-t\) idle steps and
then writing the word in \eqref{eq:DP}.

\begin{theorem}\label{thm:encoding}
For every program \(P\) of the type described above, one can
choose \(r\) and construct a nondeterministic one-counter
automaton for \(\Sigma^*\setminus D_P\).
Every word of \(D_P\) representing output \(b\) belongs to
\begin{equation}\label{eq:pattern}
 (a_0^+a_1^+\cdots a_{2\ell}^+)^{rb},
\end{equation}
and every block length in it is a positive multiple of \(b\).
\end{theorem}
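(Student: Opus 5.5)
The second assertion is immediate from the definition. A word in \eqref{eq:DP} with output \(b\) is a concatenation of \(b\) records. Each record consists of \(r\) copies of a word \(E_b(\mathbf v,\mathbf w)\), and each such copy lies in \(a_0^+a_1^+\cdots a_{2\ell}^+\). Every block length has the form \(b\) or \(b(v_j+1)\), which is a positive multiple of \(b\).

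For the main assertion, I would write \(\Sigma^*\setminus D_P\) as a finite union of "error" languages, each accepted by a nondeterministic one-counter automaton. Finite unions of one-counter languages are one-counter languages, as noted in Section~\ref{sec:counter}. The error types, in order, are as follows.

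\textbf{(1) Bad shape.} The word is not in \(((a_0^+\cdots a_{2\ell}^+)^r)^+\); this is a regular condition.

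\textbf{(2) Unequal multipliers.} Some \(a_0\)-block differs in length from the first \(a_0\)-block. The automaton counts the first \(a_0\)-block and nondeterministically chooses a later \(a_0\)-block to compare. It must carry the count across the intermediate letters without change, which is possible because the counter simply stays put.

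\textbf{(3) Block not a multiple of \(B\).} Some block length is not a positive multiple of the \(a_0\)-block length of its own word. The automaton guesses the block and the ratio \(q\) nondeterministically. For this I would verify the cleaner condition \(z\bmod z_0\neq 0\), by repeatedly subtracting \(z_0\) inside the same copy. The counter can only be loaded once, however, so I would instead exploit the \(r\) copies. In practice I expect to restrict to words that already pass (2), and then phrase divisibility via errors (4)--(5) below, so that a non-multiple is caught as a failed linear equation.

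\textbf{(4) Unequal copies.} Two copies within a record differ in some block. The automaton guesses the block index, counts it in one copy, and decrements in a later copy of the same record.

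\textbf{(5) Broken link between records.} The new configuration of record \(i\) differs from the old configuration of record \(i+1\). This is an equality of one block in the last copy of record \(i\) with the corresponding block in the first copy of record \(i+1\), and is checked by one-counter comparison.

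\textbf{(6) Illegal step.} The step in a record is not a permitted instruction or idle step. Because the block lengths are \(Bz\)-scaled, I would express "\((\mathbf v,\mathbf w)\) is related by some instruction" as a Boolean formula in linear comparisons of the raw block lengths. Multiplying the affine relation \(\mathbf w=M\mathbf v+\mathbf c\) and the conditions by \(B\) keeps them linear, with \(B=z_0\) itself a variable coefficient, and \(B\cdot\mathrm{const}\) is linear in \(z_0\). The state coordinate is tested as \(z_1=(s+1)z_0\). The negation is also such a formula, and Lemma~\ref{lem:boolean} applies to the record provided \(r\) is at least the number \(h\) of comparisons. This is where \(r\) is chosen, as the maximum over all the finitely many formulas used.

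\textbf{(7) Wrong start, end, or count.} Either the first old configuration is not \(\mathbf v_{\mathrm{init}}\) (linear tests in the first record), or the last configuration is not halting, which is a linear test on the final record. Alternatively, the output \(w_\nu\) of the final configuration, divided by \(B\) and minus one, differs from the number of records \(b\), or \(B\ne b\). For \(B\ne b\), the automaton loads \(z_0\) from the first \(a_0\)-block and decrements once at each record end, which is a finite-control-recognizable position. For the output, it adds one per record and then subtracts along the final \(a_{\ell+\nu}\)-block. Since the block length is \(B(\text{out}+1)\) and \(B\) should equal \(b\), I would instead test \(z_{\ell+\nu}\ne z_0\cdot(b+1)\). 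That is nonlinear, so I would split it as follows: given (2), compare the final output block to the sum over records of the \(a_0\)-block lengths plus one \(a_0\)-block. This sum equals \(Bb+B\) and is computable by adding each \(a_0\)-block. All copies agree by (2), so the counter adds \(z_0\) per record.

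The main obstacle will be step (3)/(6): divisibility and the scaled affine relations. These must become linear in the observed lengths, and each error must be caught by some automaton even when other parts of the word are wrong. The union formulation handles the latter: a word outside \(D_P\) fails at least one check, and each automaton only needs soundness on the words it accepts. For divisibility, I would avoid direct testing. A word passing all other checks has its blocks related affinely to \(z_0\) starting from \(\mathbf v_{\mathrm{init}}\), so every block is automatically \(z_0(\text{integer}+1)\) by induction along the records, and a non-multiple forces a failure in (5)--(7).
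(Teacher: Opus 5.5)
Your proposal is correct and follows the paper's proof essentially step for step. It uses the same finite union of one-counter error languages: bad format, unequal copies, unequal $a_0$-blocks, broken joins between records, negated step/initial/final formulas evaluated via Lemma~\ref{lem:boolean} with $r\ge h$, the test $m\neq B_1$, and the test $\sum_i B_i+B_m\neq O_m$ using one $a_0$-block per record. As in Lemma~\ref{lem:characterization}, divisibility by $B$ is never tested directly but is forced by induction from the initial and instruction equations.

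The one thing to tidy is item~(3): delete it outright rather than keep it as a fallback. No pushdown automaton can test $B\nmid z$ directly, since $\{a^mb^n: m\nmid n\}$ is not context-free, so the inductive argument you eventually settle on is the only route.
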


\subsection{Writing the tests in terms of block lengths}\label{sec:lengthtests}

Consider one word with the letter pattern
\(a_0^+a_1^+\cdots a_{2\ell}^+\). Denote the lengths of its
successive blocks by
\[
 B,\quad X_1,\ldots,X_\ell,\quad Y_1,\ldots,Y_\ell.
\]
If it equals \(E_B(\mathbf{v},\mathbf{w})\), then
\[
 v_j=\frac{X_j}{B}-1,\qquad w_j=\frac{Y_j}{B}-1.
\]
Consequently, for \(B>0\), a comparison
\[
 \sum_j\alpha_jv_j+\sum_j\gamma_jw_j\ \mathrel{\diamond}\ c
\]
where \(\diamond\) is one of \(=,<,\leq,>,\geq\),
is equivalent to the following comparison of block lengths:
\begin{equation}\label{eq:lengthcomparison}
 \sum_j\alpha_j(X_j-B)+\sum_j\gamma_j(Y_j-B)
       \ \mathrel{\diamond}\ cB.
\end{equation}
All coefficients are fixed integers. For example, the update
\(w_j=v_s+v_t\) becomes \(Y_j=X_s+X_t-B\).
This substitution is valid even if the quotients initially
represent rational numbers. We will prove separately that the
tests force integer register values.

We use four fixed Boolean formulas in the written block lengths:
\begin{enumerate}
\item
\(\Phi_{\mathrm{nonneg}}\) is the conjunction of
\(X_j\geq B\) and \(Y_j\geq B\), for \(1\leq j\leq\ell\).
These inequalities say that all represented coordinates are
nonnegative.
\item
\(\Phi_{\mathrm{init}}\) is the conjunction of
\[
 X_j=B(v_{{\mathrm{init}},j}+1)\qquad(1\leq j\leq\ell).
\]
It specifies the initial configuration.
\item
\(\Phi_{\mathrm{step}}\) says that one instruction, including a
possible idle step, relates the two configurations.
For an instruction with source state numbered \(p\), target
state numbered \(q\), and register update
\(\mathbf{x}'=M\mathbf{x}+\mathbf{c}\), its part of this formula
consists of the state equations
\[
 X_1=(p+1)B,\qquad Y_1=(q+1)B,
\]
its condition rewritten by \eqref{eq:lengthcomparison}, and the
update equations
\[
 Y_{s+1}=\sum_{t=1}^k M_{st}X_{t+1}
          +\left(1+c_s-\sum_{t=1}^k M_{st}\right)B
          \quad(1\leq s\leq k).
\]
Take the disjunction of these formulas over all instructions.
\item
\(\Phi_{\mathrm{final}}\) says that the new state is halting
and the new output is positive. It is
\[
 \left(\bigvee_{q\text{ a halting state number}}
          Y_1=(q+1)B\right)\ \wedge\ (Y_\nu>B).
\]
\end{enumerate}

Let \(h\) be the largest number of individual linear comparisons
appearing in any one of these four formulas, and choose
\(r=\max\{2,h\}\). All the formulas and this choice of \(r\)
are determined by the finite description of \(P\).
Lemma~\ref{lem:boolean} now lets a one-counter automaton evaluate
any one of the four formulas on a record whose \(r\) copies agree.

\subsection{Conditions for a word to encode a computation}\label{sec:conditions}

The language of words that can be divided into records is regular:
\[
 \mathcal F=
 \bigl((a_0^+a_1^+\cdots a_{2\ell}^+)^r\bigr)^+.
\]
The changes of letters determine the blocks, and counting
successive occurrences of the letter pattern modulo \(r\)
determines the records. Thus every word in \(\mathcal F\)
has a unique such division.

Suppose a word has \(m\) records. In the first occurrence of
the letter pattern in record \(i\), denote the block lengths by
\[
 (B_i,\mathbf{X}_i,\mathbf{Y}_i).
\]
Thus \(\mathbf{X}_i\) contains the \(\ell\) lengths intended to
represent the old configuration, and \(\mathbf{Y}_i\) contains
those intended to represent the new one. Write
\(O_m=Y_{m,\nu}\) for the length representing the final output.
Consider the following conditions.

\begin{enumerate}
\item\label{cond:copies}
Within each record, the \(r\) occurrences of the letter pattern
have the same corresponding block lengths, so they are identical
words.
\item\label{cond:multiplier}
Every block consisting of the letter \(a_0\) has the same length,
denoted by \(B\). In particular, \(B_i=B\) for all \(i\).
\item\label{cond:join}
\(\mathbf{Y}_i=\mathbf{X}_{i+1}\) for \(1\leq i<m\).
The configuration at the end of a step therefore agrees with
the configuration at the start of the next step.
\item\label{cond:formulas}
Every record satisfies \(\Phi_{\mathrm{nonneg}}\) and
\(\Phi_{\mathrm{step}}\). The first satisfies
\(\Phi_{\mathrm{init}}\), and the last satisfies
\(\Phi_{\mathrm{final}}\).
\item\label{cond:counts}
The number of records and the specified block lengths satisfy
the two equations
\begin{equation}\label{eq:counts}
 m=B_1,\qquad \sum_{i=1}^{m}B_i=O_m-B_m.
\end{equation}
\end{enumerate}

\begin{lemma}\label{lem:characterization}
A word belongs to \(D_P\) if and only if it belongs to
\(\mathcal F\) and satisfies
conditions~(\ref{cond:copies})--(\ref{cond:counts}).
\end{lemma}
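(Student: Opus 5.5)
The plan is to prove the two inclusions separately. The forward direction is a direct check against the definitions. The reverse direction reconstructs the configurations from the block lengths by induction along the records, which is where the integrality issue arises.

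\emph{Words of \(D_P\) satisfy the conditions.} Take a word of the form \eqref{eq:DP} with output \(b\). It consists of \(b\) records \(E_b(\mathbf{v}_{i-1},\mathbf{v}_i)^r\), so it lies in \(\mathcal F\).
\begin{itemize}
\item Condition~(\ref{cond:copies}) holds by construction.
\item Condition~(\ref{cond:multiplier}) holds with \(B=b\).
\item Condition~(\ref{cond:join}) holds because the new configuration of step \(i\) and the old configuration of step \(i+1\) are both \(\mathbf{v}_i\), encoded with the same multiplier \(b\).
\item Condition~(\ref{cond:formulas}) holds because the substitution \eqref{eq:lengthcomparison} is an equivalence when \(B>0\). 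Nonnegativity, the initial configuration, each permitted instruction or idle step, and halting with output \(b\geq1\) therefore translate into the four formulas.
\item For condition~(\ref{cond:counts}), we have \(m=b=B_1\). Also \(\sum_i B_i=b^2\) and \(O_m-B_m=b(b+1)-b=b^2\).
\end{itemize}

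\emph{Reconstructing the configurations.} Conversely, take a word of \(\mathcal F\) satisfying (\ref{cond:copies})--(\ref{cond:counts}). Its division into \(m\) records is unique. Condition~(\ref{cond:copies}) lets us speak of a single word \(a_0^{B_i}\cdots a_{2\ell}^{Y_{i,\ell}}\) repeated \(r\) times in record \(i\). Condition~(\ref{cond:multiplier}) gives a common \(B\geq1\).

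The key step is to show by induction on \(i\) that every \(X_{i,j}\) and \(Y_{i,j}\) is a positive multiple of \(B\). In that case \(v_{i,j}=X_{i,j}/B-1\) and \(w_{i,j}=Y_{i,j}/B-1\) are nonnegative integers by \(\Phi_{\mathrm{nonneg}}\).
\begin{itemize}
\item For \(i=1\), \(\Phi_{\mathrm{init}}\) gives \(\mathbf{X}_1=B(\mathbf{v}_{\mathrm{init}}+\mathbf 1)\).
\item Suppose \(\mathbf{X}_i\) consists of multiples of \(B\). Then \(\Phi_{\mathrm{step}}\) selects some instruction. Its state equations make \(X_{i,1}\) and \(Y_{i,1}\) equal to \((p+1)B\) and \((q+1)B\). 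Its update equations express each \(Y_{i,s+1}\) as an integer combination of the \(X_{i,t+1}\) and \(B\), so \(\mathbf{Y}_i\) also consists of multiples of \(B\).
\item Condition~(\ref{cond:join}) passes this to \(\mathbf{X}_{i+1}\).
\end{itemize}
Once integrality is known, \eqref{eq:lengthcomparison} is an equivalence. Hence the decoded pairs \((\mathbf{v}_{i-1},\mathbf{v}_i)\) are related by an instruction whose source state and condition match, whose new register values are nonnegative, and which may be an idle step. This gives a genuine computation from \(\mathbf{v}_{\mathrm{init}}\). By construction of the program, idle steps are the only moves out of halting states, so this is a halting computation followed by padding.

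\emph{The counting equations.} Condition~(\ref{cond:counts}) gives \(m=B\). It also gives \(mB=B^2=O_m-B\), so the decoded output is \(O_m/B-1=B\). Thus the number of records, the multiplier, and the output all equal \(b:=B\). The final configuration is halting by \(\Phi_{\mathrm{final}}\), which also gives output \(b\geq1\). The word is then exactly \(\prod_{i=1}^bE_b(\mathbf{v}_{i-1},\mathbf{v}_i)^r\), so it lies in \(D_P\).

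I expect the main obstacle to be the integrality induction. The formulas only constrain block lengths, which could a priori encode rational register values. The induction must use the affine form of the update equations together with the anchoring of \(\mathbf{X}_1\) by \(\Phi_{\mathrm{init}}\), rather than any property of individual records in isolation.
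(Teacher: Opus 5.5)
Your proposal is correct and follows essentially the same route as the paper. The forward direction is a direct check, with the counting equations reducing to $b=b$ and $b^2=b(b+1)-b$. The converse anchors the first configuration with $\Phi_{\mathrm{init}}$, propagates integrality record by record through the integer affine updates and the joining condition, and then uses the two counting equations to force $B=m=b$. You phrase integrality as ``every block length is a multiple of $B$'' and solve $m=B$ first, whereas the paper derives $m=b$ from the second equation and then $B=m$ from the first; these are cosmetic differences.
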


\begin{proof}
A word in \eqref{eq:DP} has \(b\) records and uses the multiplier
\(b\) throughout. Its final output is \(b\), so the block length
representing that output is \(b(b+1)\). The two equations in
\eqref{eq:counts} are therefore \(b=b\) and
\(b^2=b(b+1)-b\). The other conditions follow directly from
the encoding of the computation.

Conversely, suppose the conditions hold. The block lengths are
positive integers, but we don't yet know that division by \(B\)
will yield integers. The initial equations give
\[
 X_{1,j}=B(v_{{\mathrm{init}},j}+1).
\]
Hence \(X_{1,j}/B-1=v_{\mathrm{init},j}\), so the first represented
configuration has exactly the prescribed integer coordinates.

Proceed by induction through the records. Suppose that the old
configuration in a record has integer register values and is the
configuration reached by the preceding instructions. The formula
\(\Phi_{\mathrm{step}}\) specifies an instruction, possibly an
idle step. Dividing its equations by \(B\) gives the stated
condition and affine update of that instruction. An affine map
with integer coefficients takes integer inputs to integer
outputs. The formula \(\Phi_{\mathrm{nonneg}}\) ensures
that the new register values are nonnegative, and the state
equation specifies a permitted target state. Thus the new
configuration is obtained by a permitted step of the program.
Condition~(\ref{cond:join}) supplies exactly this configuration
as the old configuration in the next record.

The word therefore represents a computation, including any idle
steps, from the prescribed initial configuration to a halting
configuration with positive integer output \(b\). Its final output
block has length \(O_m=B(b+1)\). Since all \(B_i=B\), the second
equation in \eqref{eq:counts} gives \(mB=Bb\), and therefore
\(m=b\). The first gives \(B=m\). Thus
\[
 B=m=b.
\]
The word is exactly the encoding \eqref{eq:DP}. Once a halting
state is reached, only idle steps are possible, so the
computation halted within its first \(b\) steps.
\end{proof}

This induction explains why a separate test of divisibility
by an input-dependent integer \(B\) is not needed. The initial
equations and the instruction equations force every represented
register value to be an integer.

\subsection{Recognizing the words that fail these conditions}

\begin{proof}[Proof of Theorem~\ref{thm:encoding}]
We construct a finite union of languages accepted by one-counter
automata, one for each kind of error.
Words outside \(\mathcal F\) are accepted by a finite automaton.
For the other tests, the input is restricted to \(\mathcal F\).

An automaton can choose a record by scanning complete records
until it nondeterministically decides to begin a test. Its
finite control remembers which of the \(r\) occurrences of the
letter pattern it is reading, and which block within that
occurrence. To test the last record, it guesses a record and
requires the input to end after that record. These choices
therefore require no information about an unread suffix.

\begin{enumerate}
\item
\emph{Unequal copies within a record.}
Choose two consecutive occurrences of the letter pattern in one
record and one block position. Compare the two lengths at that
position by adding one for each letter of the first block and
subtracting one for each letter of the second. Accept if the
result is nonzero. Comparing consecutive copies suffices to
detect any failure of condition~(\ref{cond:copies}).

\item
\emph{Different lengths for two blocks of the letter \(a_0\).}
Choose two consecutive such blocks and compare their lengths in
the same way. All intervening letters are read without changing
the counter. If condition~(\ref{cond:multiplier}) fails, some
consecutive pair has different lengths.

\item
\emph{Different configurations at the joining of two steps.}
Choose consecutive records \(i,i+1\) and a coordinate \(j\).
Compare \(Y_{i,j}\) with \(X_{i+1,j}\), using the first occurrence
of the letter pattern in each record. The number and order of
the intervening blocks are fixed, so finite control identifies
both blocks while preserving the counter.
Accept if their lengths differ.

\item
\emph{An incorrect instruction or configuration.}
Choose a record and test the negation of
\(\Phi_{\mathrm{nonneg}}\) or \(\Phi_{\mathrm{step}}\),
using Lemma~\ref{lem:boolean}. Alternatively, test the negation
of \(\Phi_{\mathrm{init}}\) on the first record or of
\(\Phi_{\mathrm{final}}\) on the last. When the repeated words
agree, these tests accept exactly a failure of the chosen
formula.

\item
\emph{Failure of \(m=B_1\).}
Subtract one for every letter in the very first block of the
input. Add one at the end of each record, including the last.
The final signed counter is \(m-B_1\). Accept if it is nonzero.

\item
\emph{Failure of the second equation in \eqref{eq:counts}.}
For each record, add the length of its first block, namely \(B_i\).
Guess which record is last before reading its first block.
For that block add twice its length instead of once, and then
subtract the length \(O_m\) representing the new output in the
first occurrence of the letter pattern in that record.
Read the remaining blocks without changing the counter, and
require the end of input after the record. The result is
\[
 \sum_{i=1}^{m}B_i+B_m-O_m.
\]
Accept if it is nonzero. This procedure adds or subtracts fixed
amounts per input letter; it does not multiply two unknown
numbers.
\end{enumerate}

Each test uses one counter by Lemmas~\ref{lem:linear}
and~\ref{lem:boolean}. Its finite control can simultaneously
verify the regular format \(\mathcal F\).
The tests of Boolean formulas may give arbitrary answers if
the purported copies differ. Such a word is already outside
\(D_P\) and is accepted by the first test.

No word of \(D_P\) passes any error test. Conversely, a word
outside \(D_P\) either lies outside \(\mathcal F\), fails one
of the equality conditions, or, by
Lemma~\ref{lem:characterization}, fails a formula or an equation
in \eqref{eq:counts}. Hence some test accepts it.
Their finite union is exactly \(\Sigma^*\setminus D_P\), proving
the one-counter assertion. The letter pattern and the multiples
of \(b\) in the theorem follow directly from \eqref{eq:DP}.
\end{proof}

The two equations in \eqref{eq:counts} have separate purposes.
The sum of the \(a_0\)-block lengths forces the number of records
to equal the computed output. Counting the records then forces
their common multiplier \(B\) to equal that same number.
Both conditions can be checked by adding and subtracting
quantities already written as lengths in the input.

\section{Factorial computations and finite automata}\label{sec:factorial}

\subsection{A fixed program for factorials}

The following program has states \(C,R,H\) and four registers
\((i,f,j,s)\). Initially it is in state \(C\), with
\[
 (i,f,j,s)=(1,1,0,0).
\]
The output register is \(f\), and \(H\) is the halting state.
Each row describes one instruction; all assignments are
simultaneous.

\begin{center}
\begin{tabular}{cccl}
\hline
State & Condition & New register tuple & New state\\
\hline
\(C\) & \(i\geq4\) & \((i,f,j,s)\) & \(H\)\\
\(C\) & true & \((i+1,f,0,0)\) & \(R\)\\
\(R\) & \(j<i\) & \((i,f,j+1,s+f)\) & \(R\)\\
\(R\) & \(j=i\) & \((i,s,j,s)\) & \(C\)\\
\hline
\end{tabular}
\end{center}

Whenever the program is in state \(C\), the registers satisfy
\(f=i!\), as proved below. It may continue by computing the next
factorial, or halt if \(i\geq4\). In state \(R\), it multiplies
the previous factorial by the new value of \(i\), using repeated
addition. All conditions and updates are of the type specified
in Section~\ref{sec:encoding}. For padding, we add only the idle
step at \(H\).

\begin{lemma}\label{lem:factorialrun}
The possible halting outputs are exactly \(n!\), for \(n\geq4\).
For each such \(n\), there is a unique computation that first
reaches \(H\) with output \(n!\). It takes
\begin{equation}\label{eq:runtime}
 t_n=1+\sum_{k=2}^n(k+2)=\frac{n(n+5)}2-2
\end{equation}
steps, and \(t_n<n!\).
\end{lemma}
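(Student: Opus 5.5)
We will prove the statement by establishing an invariant for the configurations in state \(C\), and then reading off the sequence of steps, which is forced. The invariant is: every time the computation is in state \(C\), the registers have the form \((i,f,j,s)=(n,n!,j,s)\) for some \(n\geq1\). Moreover, the path from the previous visit to \(C\) (or from the start) is uniquely determined. The base case is the initial configuration \((1,1,0,0)\) with \(1!=1\).

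For the inductive step, suppose we are in \(C\) with \(i=n\) and \(f=n!\). Exactly two instructions may apply. The first is halting, which is allowed only if \(n\geq4\) and gives output \(n!\). The second is the transition to \(R\) with registers \((n+1,n!,0,0)\). In state \(R\) the conditions \(j<i\) and \(j=i\) are mutually exclusive, so the behaviour is deterministic. We verify the loop invariant \(s=j\cdot n!\) while \(j\) increases from \(0\) to \(n+1\); here \(i=n+1\) and \(f=n!\) stay fixed. This takes \(n+1\) steps. At that point \(j=i\) holds, and the last instruction moves to \(C\) with \(f=s=(n+1)\,n!=(n+1)!\). We also check that every new value is nonnegative, so each instruction is permitted.

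The whole computation is therefore a sequence of rounds, each with a single binary choice at \(C\): continue or halt. Since \(i\) strictly increases from round to round, the value of \(i\) at the moment of halting equals \(n\). This gives a unique computation first reaching \(H\) with output \(n!\). The outputs are distinct for distinct \(n\geq4\), since \(n!\) is strictly increasing for \(n\geq1\). Hence the possible halting outputs are exactly \(\{n!:n\geq4\}\), and the computation is determined by its output. Idle steps occur only after \(H\), so "first reaches" is unambiguous.

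For the step count, one round from \(i=k-1\) to \(i=k\) consists of:
\begin{itemize}
\item one step \(C\to R\);
\item \(k\) increment steps, since \(j\) runs from \(0\) to \(k\);
\item one step \(R\to C\).
\end{itemize}
That is \(k+2\) steps per round. The rounds \(k=2,\dots,n\) are followed by one halting step, giving \(t_n=1+\sum_{k=2}^n(k+2)\). The sum equals \(\bigl(\tfrac{n(n+1)}2-1\bigr)+2(n-1)\), so \(t_n=\tfrac{n(n+5)}2-2\).

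For the inequality \(t_n<n!\), check \(n=4\) directly: \(t_4=16<24\). Then proceed by induction, using \(t_{n+1}-t_n=n+3\) and \((n+1)!-n!=n\cdot n!\geq n+3\) for \(n\geq4\). The main care point is the off-by-one in the count of \(R\)-loop steps (\(j<i\) fires for \(j=0,\dots,i-1\)); the rest is routine.
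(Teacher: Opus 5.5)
Your proof is correct and follows essentially the same route as the paper. Both use the invariant \(f=i!\) in state \(C\), the forced loop in \(R\) with \(s=j\cdot f\), the single continue/halt choice at \(C\), a count of \(k+2\) steps per round plus one halting step, and induction from \(t_4=16<24\) using \(t_{n+1}-t_n=n+3\le n\cdot n!\); your remarks on determinism in \(R\) and on the injectivity of \(n\mapsto n!\) only spell out points the paper leaves implicit.
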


\begin{proof}
Initially, in state \(C\), the values are \(i=1\) and \(f=1!\).
Suppose the program is in \(C\) with \(i=k-1\) and \(f=(k-1)!\),
and chooses to continue. The step to \(R\) sets \(i=k\),
\(j=0\), and \(s=0\), leaving \(f\) unchanged. After \(j\)
applications of the addition instruction,
\[
 0\leq j\leq k,\qquad s=j(k-1)!.
\]
Exactly \(k\) such steps give \(j=k\) and \(s=k!\).
The next instruction returns to \(C\) and sets \(f=k!\).
Induction proves the assertion \(f=i!\) in state \(C\).
The only choice is whether to halt when \(i\geq4\), so exactly
one computation halts at each \(n\geq4\).

Computing \(k!\) from \((k-1)!\) takes one step from \(C\)
to \(R\), \(k\) addition steps, and one step back to \(C\).
The final step from \(C\) to \(H\) accounts for the extra one
in \eqref{eq:runtime}.
For \(n=4\), \(t_4=16<24\). Also,
\[
 t_{n+1}=t_n+n+3,\qquad
 (n+1)!-n!=n \cdot n!\geq n+3\quad(n\geq4).
\]
Induction gives \(t_n<n!\) for every \(n\geq4\).
\end{proof}

Assign state numbers \(C=0\), \(R=1\), and \(H=2\). The five
configuration coordinates are \((q,i,f,j,s)\), so the encoding
alphabet is \(\Sigma=\{a_0,\ldots,a_{10}\}\).
Choose \(r\) by Theorem~\ref{thm:encoding} for this program.
For each \(n\geq4\), append \(n!-t_n\) idle steps to its unique
halting computation and encode the resulting \(n!\) steps using
the multiplier \(n!\). Denote the resulting word by \(w_n\).
Define
\begin{equation}\label{eq:Dfactorial}
 D=D_P=\{w_n:n\geq4\},\qquad A=\Sigma^*\setminus D.
\end{equation}
By Theorem~\ref{thm:encoding}, \(A\) is accepted by a
nondeterministic one-counter automaton and is therefore context-free.
The first block of \(w_n\) consists of \(n!\) copies of \(a_0\).
The words \(w_n\) are therefore distinct, and \(D\) is infinite.

\subsection{Why finite automata eventually give the same answer}

\begin{lemma}\label{lem:finitepower}
Let \(Q\) be a set of \(q\geq1\) elements and put \(e=q!\).
For every map \(\tau:Q\to Q\) and every \(k\geq1\), we have
\[
 \tau^{ke}=\tau^e.
\]
\end{lemma}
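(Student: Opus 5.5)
The plan is to use the standard tail-and-cycle structure of iterates of a self-map of a finite set. Fix \(x\in Q\) and consider the orbit \(x,\tau(x),\tau^2(x),\ldots\). Since \(Q\) has \(q\) elements, among \(x,\tau(x),\ldots,\tau^{q}(x)\) two entries coincide. So there are a least \(\mu\geq0\) and a least \(\lambda\geq1\) with \(\tau^{\mu+\lambda}(x)=\tau^{\mu}(x)\). The \(\mu+\lambda\) elements \(\tau^0(x),\ldots,\tau^{\mu+\lambda-1}(x)\) are distinct, so \(\mu+\lambda\leq q\). In particular \(\mu\leq q-1\) and \(1\leq\lambda\leq q\).

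Two facts follow for \(x\). First, by induction on \(m\), \(\tau^{n+m\lambda}(x)=\tau^{n}(x)\) for all \(n\geq\mu\) and \(m\geq0\). Second, \(\lambda\) divides \(e=q!\) because \(\lambda\leq q\), and \(e\geq q>\mu\) (for \(q\geq1\), \(q!\geq q\)). So we may take \(n=e\geq\mu\) and \(m=(k-1)e/\lambda\), a nonnegative integer. This gives \(\tau^{ke}(x)=\tau^{e+(k-1)e}(x)=\tau^e(x)\).

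Since \(x\) was arbitrary, \(\tau^{ke}=\tau^e\) as maps. The main point needing care is the bound on the preperiod: \(e\) must be at least the tail length \(\mu\). This holds because \(\mu\leq q-1<q\leq q!\). The divisibility \(\lambda\mid q!\) is immediate from \(\lambda\leq q\). No earlier result from the paper is needed; the argument is purely combinatorial.
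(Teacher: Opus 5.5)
Your proof is correct and follows the same route as the paper. For each point, the tail has length \(\mu\le q-1<e\), and the cycle length \(\lambda\le q\) divides \(e=q!\), so \(e\) and \(ke\) steps land on the same point of the cycle; you simply spell out the details (distinctness giving \(\mu+\lambda\le q\), and \(\tau^{n+m\lambda}(x)=\tau^n(x)\) for \(n\ge\mu\)) that the paper compresses into a few lines.
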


\begin{proof}
For each starting point, repeated application of \(\tau\)
enters a cycle after at most \(q-1\) steps. The cycle length
is at most \(q\), and hence divides \(e\).
Both \(e\) and \(ke\) steps reach this cycle, and the difference
between these numbers of steps is divisible by its length.
The two powers therefore agree at every starting point.
\end{proof}

\begin{theorem}\label{thm:regular}
For the language \(D\) in \eqref{eq:Dfactorial} and every
regular \(R\subseteq\Sigma^*\), either \(D\cap R\) or
\(D\setminus R\) is finite.
\end{theorem}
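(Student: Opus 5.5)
Take a deterministic finite automaton for \(R\) with state set \(Q\) of size \(q\geq1\), and put \(e=q!\). For each letter \(a\in\Sigma\), let \(\tau_a:Q\to Q\) be its transition map, so that reading \(a^m\) acts as \(\tau_a^m\). The plan is to show that for every \(n\) with \(n\geq q\), the word \(w_n\) drives the automaton from the initial state to a state independent of \(n\). Then either all \(w_n\) with \(n\geq q\) lie in \(R\), or none do. In the first case \(D\setminus R\subseteq\{w_n:4\leq n<q\}\) is finite; in the second, \(D\cap R\) is finite.

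\textbf{Step 1: the structure of \(w_n\).} By \eqref{eq:DP} with \(b=n!\), the word \(w_n\) is the concatenation of \(n!\) records. Record \(i\) is \(E_{n!}(\mathbf v_{i-1},\mathbf v_i)^r\), and every block in it has length \(n!\cdot(v+1)\) for some coordinate \(v\). Since \(n\geq q\), \(e=q!\) divides \(n!\), so every block length has the form \(ke\) with \(k\geq1\). By Lemma~\ref{lem:finitepower}, each block \(a^{ke}\) acts as \(\tau_a^{e}\), which does not depend on \(k\). Hence the action of record \(i\) is the fixed map
\[
 \rho=\bigl(\tau_{a_0}^e\tau_{a_1}^e\cdots\tau_{a_{2\ell}}^e\bigr)^r,
\]
written in reading order. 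This map does not depend on \(i\) or \(n\), because the letter pattern of each record is always \(a_0\cdots a_{2\ell}\) repeated \(r\) times. The register values only affect the multipliers \(k\), which Lemma~\ref{lem:finitepower} absorbs.

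\textbf{Step 2: the number of records.} The whole word \(w_n\) acts as \(\rho^{n!}\). Since \(\rho\) is a map \(Q\to Q\) and \(e\mid n!\), write \(n!=ke\) and apply Lemma~\ref{lem:finitepower} again to get \(\rho^{n!}=\rho^{e}\). Thus the state reached on \(w_n\) is \(\rho^{e}\) applied to the initial state, which is the same for all \(n\geq\max\{q,4\}\).

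\textbf{Main obstacle.} The only delicate point is confirming that the idle padding and the choice of multiplier make both the block lengths and the record count multiples of \(n!\). This is exactly what Theorem~\ref{thm:encoding} guarantees with \(b=n!\), using \(t_n<n!\) from Lemma~\ref{lem:factorialrun}. Once that is in place, the argument is two applications of Lemma~\ref{lem:finitepower}.
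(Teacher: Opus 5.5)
Your proof is correct and takes essentially the same route as the paper: each block is collapsed to \(\tau_a^e\) via Lemma~\ref{lem:finitepower}, and the lemma is applied once more to the number of repeated units. The only difference is bookkeeping: you group by records and use \(\rho^{n!}=\rho^e\), while the paper counts the \(rn!\) individual words \(E_{n!}(\mathbf v,\mathbf w)\) and uses \(g^{rn!}=g^e\) with \(g\) the action of \(a_0^e\cdots a_{10}^e\).
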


\begin{proof}
Choose a deterministic finite automaton for \(R\), with \(q\)
states, and let \(e=q!\). Reading the letter \(a_j\) defines
a map \(\tau_j\) on its state set: \(\tau_j(s)\) is the state
reached from \(s\) after that letter.

If \(n\geq\max\{4,q\}\), every block length in \(w_n\) is a
positive multiple of \(n!\), and therefore a positive multiple of \(e\).
By Lemma~\ref{lem:finitepower}, reading any block of the letter
\(a_j\) has exactly the same effect as applying \(\tau_j^e\).
Let \(g\) be the map on the state set obtained by reading
\[
 a_0^e a_1^e\cdots a_{10}^e.
\]
Each word \(E_{n!}(\mathbf{v},\mathbf{w})\) in the encoding of
\(w_n\) has this same effect \(g\), whatever the represented
configurations.
There are \(rn!\) such words in \(w_n\), so reading all of it
has the effect
\[
 g^{rn!}=g^e,
\]
again by Lemma~\ref{lem:finitepower}.
This map is independent of \(n\) once \(n\geq\max\{4,q\}\).
Starting at the initial state, the automaton therefore accepts
all those \(w_n\), or rejects all of them.
Only finitely many words of \(D\) remain, proving the assertion.
\end{proof}

\section{Proof of the main theorem}\label{sec:mainproof}

\begin{proof}[Proof of Theorem~\ref{thm:main}]
Take \(D\) and \(A\) from \eqref{eq:Dfactorial}, and define
\(L,U\) by \eqref{eq:endpoints}.
The complement \(A\) is context-free, \(D\) is infinite, and
Theorem~\ref{thm:regular} gives the property required by
Proposition~\ref{prop:reduction}.
Thus \(L,U\) have all the properties in
Theorem~\ref{thm:main}, except that they are currently languages
over \(\Gamma=\Sigma\cup\{\#\}\), an alphabet of twelve letters.

To obtain binary languages, map \(a_j\), for \(0\leq j\leq10\),
to the four-bit binary expansion of \(j\), including leading
zeros, and map \(\#\) to \(1011\). Extend the map by concatenation
to a homomorphism
\[
 h:\Gamma^*\longrightarrow\{0,1\}^*.
\]
All codewords are distinct and have the same length, so \(h\)
is injective. Closure under homomorphism makes \(h(L)\) and
\(h(U)\) context-free. Their difference is infinite because
\(h(U)\setminus h(L)=h(U\setminus L)\).

Suppose a context-free \(K'\) satisfied
\[
 h(L)\subseteq K'\subseteq h(U)
\]
with both \(K'\setminus h(L)\) and \(h(U)\setminus K'\)
infinite. Context-free languages are closed under inverse
homomorphism, so \(K=h^{-1}(K')\) would be context-free and
satisfy \(L\subseteq K\subseteq U\).
Since every word of \(K'\) belongs to \(h(\Gamma^*)\),
injectivity gives bijections
\[
 K\setminus L\longleftrightarrow K'\setminus h(L),
 \qquad
 U\setminus K\longleftrightarrow h(U)\setminus K'.
\]
Both \(K\setminus L\) and \(U\setminus K\) would then be
infinite, contrary to Proposition~\ref{prop:reduction}.
Consequently \(h(L)\) and \(h(U)\) prove
the desired result.
\end{proof}

\section{Declaration of AI usage}

Most of the ideas in this paper were obtained by the LLM GPT 6 Astra.  Originally the first author (RM) obtained a proof with this LLM, which was then modified repeatedly by Astra, following the suggestions of the second author (JS).  Some of the text was then rewritten by the second author.  The authors take full responsibility for all claims.

\end{document}